\documentclass[superscriptaddress,onecolumn,notitlepage,pra]{revtex4-1}
\usepackage[colorlinks=true,citecolor=blue,urlcolor=black]{hyperref}
\usepackage{graphicx}
\usepackage{dcolumn}
\usepackage{bm}
\usepackage{amsthm}
\usepackage{amsmath}
\usepackage{amssymb}
\usepackage{color}
\usepackage{multirow}
\usepackage{algorithm}
\usepackage{algorithmicx}
\usepackage{algpseudocode}
\usepackage{subfigure}

\newtheorem{theorem}{Theorem}



\definecolor{Red}{rgb}{1,0,0}

\def\ket#1{| #1 \rangle}
\def\bra#1{\langle #1 |}

\def\RR{\mathbb{R}}


\def\poly{\operatorname{poly}}

\def\H{\mathcal{H}}

\def\R{\mathbb{R}}


\def\RR{\mathbb{R}}
\def\CC{\mathbb{C}}



\begin{document}

\title{A Query-based Quantum Eigensolver}

\author{Shan Jin}
\affiliation{Institute of  Fundamental and Frontier Sciences, University of Electronic Science and Technology of China, Chengdu, 610051, China}

\author{Shaojun Wu}
\affiliation{Institute of  Fundamental and Frontier Sciences, University of Electronic Science and Technology of China, Chengdu, 610051, China}

\author{Guanyu Zhou}
\affiliation{Institute of  Fundamental and Frontier Sciences, University of Electronic Science and Technology of China, Chengdu, 610051, China}

\author{Ying Li}
\affiliation{Graduate School of China Academy of Engineering Physics, Beijing 100193, China}

\author{Lvzhou Li}
\affiliation{Institute of Computer Science Theory, School of Data and Computer Science, Sun Yat-sen University, Guangzhou, 510006, China}

\author{Bo Li}
\affiliation{Key Laboratory of Mathematics Mechanization, Academy of Mathematics and Systems Science, Chinese Academy of Sciences, Beijing 100190, China}

\author{Xiaoting Wang}
\email{xiaoting@uestc.edu.cn}
\affiliation{Institute of  Fundamental and Frontier Sciences, University of Electronic Science and Technology of China, Chengdu, 610051, China}

\begin{abstract}
Solving eigenvalue problems is crucially important for both classical and quantum applications. Many well-known numerical eigensolvers have been developed, including the QR and the power methods for classical computers, as well as the quantum phase estimation(QPE) method and the variational quantum eigensolver for quantum computers. In this work, we present an alternative type of quantum method that uses fixed-point quantum search to solve Type II eigenvalue problems. It serves as an important complement to the QPE method, which is a Type I eigensolver. We find that the effectiveness of our method depends crucially on the appropriate choice of the initial state to guarantee a sufficiently large overlap with the unknown target eigenstate. We also show that the quantum oracle of our query-based method can be efficiently constructed for efficiently-simulated Hamiltonians, which is crucial for analyzing the total gate complexity. In addition, compared with the QPE method, our query-based method achieves a quadratic speedup in solving Type II problems.
\end{abstract}

\maketitle

\section{Introduction}\label{sec1}

Quantum algorithms for quantum circuits have demonstrated their potential advantages in computational complexity over their classical counterparts, in solving various mathematical problems, such as the integer factorization problem~\cite{shor1994algorithms}, unsorted database search problem~\cite{grover1997quantum}, linear equation problem~\cite{harrow2009quantum} and simulating quantum systems~\cite{lloyd1996universal,berry2007efficient}. Another typical problem is the eigenvalue problem, important both in theory and in applications, and many useful numerical methods have so far been proposed for both classical and quantum circuits. For classical algorithms, well-known eigensolvers include the QR method~\cite{francis1961qr,francis1962qr,kublanovskaya1962some}, the Jacobi method~\cite{carl1846jacobi,golub2000eigenvalue}, and the Sturm sequences method~\cite{gupta1972solution,gupta1973eigenproblem}; for quantum algorithms, two major methods have been developed. The first one is the quantum phase estimation(QPE) \cite{nielsen2002quantum,abrams1999quantum} (a subcircuit of Shor's algorithm \cite{shor1994algorithms}). It uses the quantum Fourier transform(QFT) to find the eigenvalues and the corresponding eigenvectors for a given unitary operator. If the input state of the eigenstate register is in a superposition of different eigenstates, then the output state becomes an entangled state between the eigenvalue and the eigenstate registers. Hence a final measurement after the QPE circuit will derive one of the eigenvalues in the spectrum. The other useful quantum eigensolver is the variational quantum eigensolver(VQE)~\cite{peruzzo2014variational,wang2019accelerated}. VQE uses the quantum-classical hybrid computing architecture. It uses a parametrized quantum circuit to prepare the final state, and then uses classical computer to analyze the measurement results and optimize the parameters. Hence, VQE can be used to first find the ground energy of a given Hamiltonian matrix, then find the next lowest eigenvalue, and then one-by-one find all the eigenvalues in ascending order. In addition, a VQE algorithm on the full quantum system has also been proposed recently~\cite{wei2020full}.

Besides the QPE circuit, the Grover's search algorithm~\cite{grover1997quantum} also has wide applications in quantum algorithm design. It rotates the initial state to the target state through a sequence of Grover iterations~\cite{grover1997quantum,grover1998quantum}. Compared with the classical query complexity $O(N)$, Grover's algorithm achieves a quadratic speedup \cite{nielsen2002quantum,hoyer2000arbitrary} and such quadratic speedup has been proved optimal~\cite{bennett1997strengths,boyer1998tight,zalka1999grover}.
Many applications of quantum search and the improvement of the method have been proposed \cite{brassard1997exact, grover2005fixed, yoder2014fixed,long1999phase, long2001grover,long2002phase, toyama2013quantum}. The application in amplitude amplification was proposed by Brassard et al based on the quantum search \cite{brassard1997exact, brassard2002quantum}, and was independently discovered by Grover in 1998~\cite{grover1998quantum}. However, without knowing how many solutions the search problem has, it is difficult to determine the exact number of Grover iterations, resulting in the possibility of ``undercooking" or ``overcooking"~\cite{long2001grover}. In order to solve this problem, Grover then proposed a fixed-point search method, which guarantees the initial state to converge monotonically to the target state, though the quadratic speedup is lost~\cite{grover2005fixed}. Later, an improved version of the fixed-point search was proposed, regaining the advantage of quadratic speedup~\cite{yoder2014fixed}.

In this work, inspired by the fixed-point search algorithm, we propose a query-based method to solve the Type II eigenvalue problem(i.e., finding the eigenvalue near a given value point). We set the unknown target eigenstate as the search target and transform the Type II eigenvalue problem into a query-based search problem. We show how to use the phase estimation circuit to construct the oracle of the fixed-point quantum search. We also discuss how to choose the initial states to guarantee a sufficiently large overlap with the unknown target eigenstate, which is crucial for the efficiency of the amplitude amplification, and in the meanwhile these initial states are easy to prepare in experiment. Thus, a query complexity of $O(\sqrt{N})$ can be achieved. In particular, when the oracle can be efficiently implemented, i.e., when the gate complexity of the QPE circuit is $O(\text{poly}(\log N))$, the entire gate complexity of our method becomes $O(\sqrt N\text{poly}(\log N))$, demonstrating a quadratic speedup over the complexity of the conventional QPE method to solve the same problem. Finally, as two examples, we apply our method to solving the Type II eigenvalue problems for the Heisenberg model and the hydrogen molecule.


\section{The eigenvalue problem and its classical algorithms}\label{sec2}

The general eigenvalue problem is formulated as follows: for a linear operator $A$ defined on a $N$-dimensional Hilbert space $\H$, we hope to find the eigenvalues $\lambda_k$ and the corresponding eigenvectors $x_k$ such that $Ax_k=\lambda_k x_k$, for all $k$. In the matrix form, $A$ is an $N\times N$ complex matrix. When $A$ is a normal matrix, it has $N$ number of eigenvalues(some may be degenerate) and $N$ independent eigenvectors that form an orthogonal basis for $\H$. The basic assumption for classical eigensolvers is that $A$ has $N$ independent eigenvectors, not necessarily orthogonal to each other. In other words, we only discuss matrices that can be diagonalized through similarity transformation. In addition, we can identify two types of eigenvalue problems:
\begin{itemize}
\item[I.] to find out the whole spectrum;
\item[II.] to find out particular eigenvalues in the spectrum, e.g., the eigenvalue near a given point, or the largest or the second-largest eigenvalue.
\end{itemize}
For Type I problems, the QR method~\cite{francis1961qr,francis1962qr,kublanovskaya1962some} is one of the most popular and efficient algorithms for simultaneously calculating all the eigenvalues and the eigenvectors of $A$, especially when $A$ is not sparse. The Jacobi and the Sturm sequences methods are also typical numerical algorithms to compute the whole spectrum of a Hermitian, sparse matrix. For problems of type II, the power and the inverse power methods are popular algorithms. The power method is good at approximating the eigenvalue with the maximum module, while the inverse power method is utilized to find the eigenvalue with the minimum module or the eigenvalue closest to a given point. In addition, the inverse power method is efficient to determine the eigenvector corresponding to a given eigenvalue \cite{wilkinson1988algebraic}.

In terms of complexity, the computational cost is $O(N^3)$ for the QR, the Jacobi, the Sturm sequences and the inverse power methods, and $O(N^2)$ for the power method~\cite{quarteroni2010numerical}.

\section{The QPE circuit as a quantum eigensolver}\label{sec3}

One of the basic questions in quantum computation is, for the given computational problem, whether there exists a quantum algorithm whose gate complexity is strictly better than the extant classical algorithms. Typical examples are Shor's algorithm and Grover's search algorithm, which obtain exponential and quadratic speedups over their classical counterparts. For many applications of eigenvalues problems, where the dimension of the matrix $A$ could become extremely large, such as in big data and quantum mechanics, the $O(N^3)$ complexity is not good enough. For example, for a quantum system composed of $n$ qubits, the total dimension of the system is $N=2^n$, in which case, the above classical eigensolver algorithms, with complexity $O(N^3)=O(8^n)$ or $O(N^2)=O(4^n)$, become inefficient as $n$ increases. This represents an exponential complexity with respect to $n$. Can we find a quantum eigensolver whose gate complexity is $O(\poly(n))$? This is the question we would like to explore in this work. We will first study the quantum phase estimation(QPE) method, which is a Type I eigensolver.

The idea of using QPE circuit to solve the eigenvalue problem is straightforward. We assume $A$ is Hermitian and has the spectral decomposition: $A=\sum_m\lambda_m\ket{u_m}\bra{u_m}$, with $A\ket{u_m}=\lambda_m\ket{u_m}$. If $A$ is not Hermitian, but is a normal matrix, we can equivalently solve the eigenvalue problem for $A+A^\dag$ and $i(A-A^\dag)$, which are both Hermitian; if $A$ is non-normal, we can discuss a different problem to find the singular value of $A$. In that case, we need to study a new Hermitian matrix $\tilde A$
\begin{align*}
\tilde A=\begin{pmatrix}
0&A\\
A^\dag&0
\end{pmatrix}
\end{align*}
whose spectrum are the singular values of $A$. Alternatively, we can study the eigenvalue problem for $B=A^\dag A$.

In the following, we assume $A$ to be Hermitian. The total quantum processor consists of two parts: the eigenvalue register(containing $r$ qubits) and the eigenvector register(containing $n$ qubits), with $n=  \left \lceil {\log N}\right \rceil $. Without loss of generality, we assume $n=\log N$ to get rid of the cumbersome ceiling notations. Define $U=e^{2\pi iA}$, which is a unitary gate on the eigenvector register. First we initialize the total system as $\ket\Phi=\ket 0\ket\varphi$, where the initial state of the eigenvector register $\ket\varphi$ is randomly chosen from the unit sphere of an $N$-dimensional Hilbert space, with $\ket\varphi=\sum_{i=1}^N b_i\ket {u_i}$ and $b_i=\langle u_i|\varphi\rangle$ as shown in Fig.~\ref{fig1}. We then apply $U_{PE}$ to $\ket\Phi$:
\begin{align}\label{eq1}
U_{PE}\ket0\ket\varphi=\sum_{i=1}^N b_i\ket{\tilde{\lambda}_i}\ket{u_i}
\end{align}
where $\tilde{\lambda}_i$ represents the approximation of the exact value $\lambda_i$ due to the finite length $r$ with error $O(2^{-r})$. Here, we have enclosed $r$ qubits in the eigenvalue register, which determines the precision of the calculated phase. The circuit of quantum phase estimation is shown in Fig.~\ref{fig1}, where $QFT^\dagger$ represents the inverse quantum Fourier transform and $W_H=H^{\otimes r}$, where $H$ is the Hadamard gate on a single qubit. We choose $U^j=e^{2\pi ijA}$, where $j=2^0,2^1,\cdots,2^r$. We are interested in solving the eigenvalue problems for $A$ where $U=e^{2\pi iA}$ can be efficiently simulated on the quantum processor, such as when $A$ is $d$-sparse~\cite{berry2007efficient}.

\begin{figure}
\centering
\includegraphics[totalheight=1.7in,width = 3.5in]{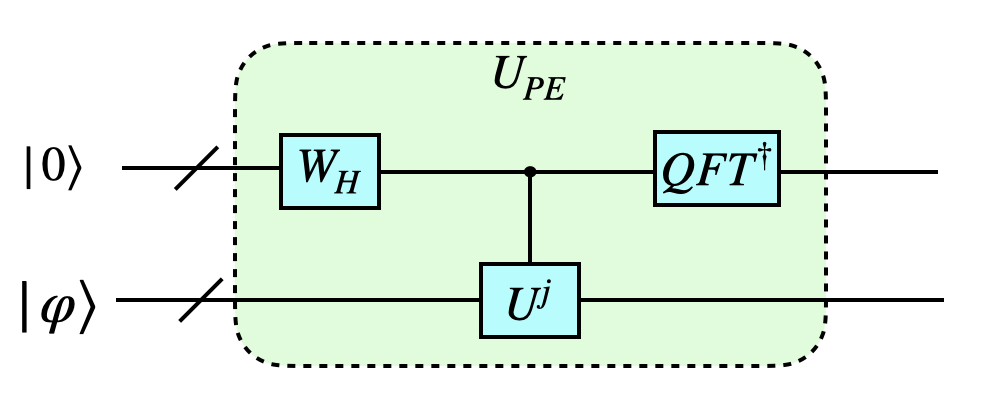}
\caption{QPE circuit to solve a Type I eigenvalue problem. ${QFT}^\dagger $ is the inverse quantum Fourier transform and the unitary transformation $U^j$ is $U^j=e^{2\pi i Aj}$, where $j=2^0,2^1,\cdots,2^r$ and $r$ is the number of qubits in the eigenvalue register, determining the precision of the calculated phase. The unitary transformation $W_{H}$ is $W_H=H^{\otimes r}$, where $H$ is the Hadamard matrix.}
\label{fig1}
\end{figure}

For a given $m$ and a given initial state $\ket\varphi$ randomly chosen from a uniform distribution, we can prove that the probability $P\big(p\equiv |\bra {u_m} \varphi\rangle|^2\ge \frac{1}{N}\big)\ge 1/e$, for sufficiently large $N$, which implies that we can find the minimum $K=11$ such that $1-(1-1/e)^K\ge 0.99$ (with proof details in Appendix~\ref{App1}). That is, if we repeat the random selection of the initial states for $11$ times and get $11$ initial states, $\ket{\varphi_k}$, $k=1,\cdots,11$, then, with probability larger than $0.99$, at least one of $\ket{\varphi_k}$ satisfies $|\bra {u_m} \varphi_k\rangle|^2\ge \frac{1}{N}$. Hence, for each of the $11$ initial states, if we repeat implementing the QPE circuit and taking the final measurement for $O(N)$ times, then we would be able to find all $\lambda_m$ and $\ket{u_m}$. Thus, a Type I problem can be solved by the QPE method, with $O(11N)=O(N)$ number of implementations of QPE circuits, giving a total gate complexity of $O(N\text{poly}(\log N))$, when $U=e^{2\pi iA}$ can be efficiently simulated.

It is worthwhile to mention that if we use the QPE method to solve a Type II problem, then we still need to implement the QPE circuits for $O(11N)=O(N)$ times on average to obtain the desired eigenvalue $\lambda_m$ near the given point $\lambda_0$. One interesting question is whether can find a quantum algorithm to solve the Type II problem with a better complexity, which is the major motivation to study the query-based eigensolver.

%
%

\section{Construction of Query-based Eigensolver}\label{sec4}

\subsection{Fixed-point quantum search}

First, we briefly review how to implement the fixed-point quantum search. The advantage of fixed-point search over the non-fixed-point search is the asymptotic convergence of the system state to the target state after a sequence of fixed-point Grover iterations~\cite{grover2005fixed}. Just like Grover's original search design, the fixed-point search can also achieve the quadratic speedup in query complexity~\cite{yoder2014fixed}. Specifically, in an oracle-based search problem, there is a classical oracle function $f$ on the total set $A$ satisfying $f(x)=1$ if and only if $x \in Q\subset A$, with $|A|=N$ and $|Q|=M\le N$. The goal is to find any of the $M$ solutions in $Q$. In standard Grover's search algorithm, one can construct an $N$-dimensional quantum system to encode the classical data $x\in A$ into its basis vectors $\{\ket x\}$. The initial state is chosen as $\ket \varphi=\frac{1}{\sqrt N}\sum_{x\in A} \ket x$, and the target state as $|t\rangle=\frac{1}{\sqrt M} \sum_{x\in Q}\ket x$. By appending an ancilla qubit $\ket b$, the quantum oracle $O_f$ can be expressed as:
\begin{align*}
O_f|x\rangle|b\rangle=
\begin{cases}
    |x\rangle| b\oplus1\rangle, \quad &x\in Q\\
   |x\rangle|b\rangle, \quad  &x\notin Q
\end{cases}
\end{align*}
Based on $O_f$, we can construct the parametrized inversion operators $R_t(\beta_k)$ and $R_\varphi(\alpha_k)$ with respect to $\ket\varphi$ and $\ket t$:
\begin{align*}
    R_t(\beta_k)&=I-(1-e^{i\beta_k})\ket t\bra t\\
    R_\varphi(\alpha_k)&=I-(1-e^{i\alpha_k})\ket\varphi\bra\varphi
\end{align*}
where $\alpha_k$ and $\beta_k$ are angle parameters. It can be shown that by choosing appropriate values of $\alpha_k$ and $\beta_k$, $k=1,\cdots,l$, the following Grover iteration sequence
\begin{align}
U^{(l)}=G(\alpha_l,\beta_l)\cdots G(\alpha_1,\beta_1)
\end{align}
will make the final state converge to target state $\ket t$ asymptotically as $l$ increases, satisfying $\left|\bra t U^{(l)}\ket\varphi\right|^2\ge1-\delta$, where $G(\alpha_j,\beta_j)=-R_t(\beta_j)R_\varphi(\alpha_j)$ and $l$ is the number of iteration. When $\alpha=\beta=\pi$, $G$ is reduced to the original non-fixed-point Grover iteration; when $\alpha=\beta=\frac{\pi}{3}$ (or $-\frac{\pi}{3}$ at some positions of $R_\varphi$ and $R_t$ in the oracle sequence), $G$ is reduced to Grover's $\pi/3$ fixed-point algorithm; when the value of $\{\alpha_k,\beta_k\}$ is chosen according to the results by Yoder et al~\cite{yoder2014fixed}, $G$ is reduced to Yoder-Luo-Chuang(YLC)'s fixed-point algorithm, with a quadratic speedup. Since each $R_t(\beta_k)$ contains two quantum oracles $O_f$,

Define $\mu=|\bra t \varphi\rangle |^2=\frac{M}{N}$. Assuming that each iteration requires two queries, we find the following optimal value of $q$ which can achieve the error threshold $\delta$:
\begin{align*}
q&=\frac{\ln(\delta/2)}{\ln(1-\mu)}-1&\text{ (Grover's }\pi/3\text{ method)}\\
q &= \frac{\ln(2/\sqrt{\delta})}{\sqrt{\mu}}-1 &\text{ (YLC's method)},
\end{align*}
from which we can see that YLC's method can achieve quadratic speedup: $q=O\big(\frac{1}{\sqrt{\mu}}\big)=O(\sqrt{N})$.

%

\subsection{Solving eigenvalue problems by quantum search}

Next, we show how to use fixed-point Grover's search to solve a Type II eigenvalue problem on a quantum processor. Given an $N\times N$ Hermitian matrix $A$, we assume it has the spectral decomposition: $A=\sum_m\lambda_m\ket{u_m}\bra{u_m}$, with $A\ket{u_m}=\lambda_m\ket{u_m}$, which is unknown to us before the calculation. To solve it on a quantum processor, we consider $A$ as a Hermitian operator on an $n$-qubit system, with $n\equiv  \left \lceil {\log N}\right \rceil $. Without loss of generality, we assume $n=\log N$ to get rid of the cumbersome ceiling notations.

The target Type II problem is to find the eigenvalues near a given point $\lambda_0\in \R$, and the corresponding eigenvectors. Mathematically, the problem can be formulated as searching for all $\lambda_m$ within the $\epsilon$-neighborhood of $\lambda_0$: $B_\epsilon(\lambda_0)\equiv [\lambda_0-\epsilon,\lambda_0+\epsilon]$. If more than one solution is located in $B_\epsilon(\lambda_0)$, our method will randomly output one of the solutions at the final measurement; if there is no solution in $B_\epsilon(\lambda_0)$, we can always tell this from the final measurement outcome. When the latter happens, we can enlarge the value of $\epsilon$ and redo the whole process for multiple times until we find a sufficiently large $\epsilon$ such that at least one solution is located in $B_\epsilon(\lambda_0)$. Hence, without loss of generality, in the following we assume there is one and only one eigenvalue located in $B_\epsilon(\lambda_0)$. We further assume it is $\lambda_1$. Then for quantum search, we can define the target state
\begin{align*}
\ket{t}=\frac{1}{\sqrt{M}}\sum_{\lambda_m\in B_\epsilon(\lambda_0)}\ket{u_m},
\end{align*}
where $M$ is the number of eigenvalues within $B_\epsilon(\lambda_0)$. Under the unique-solution assumption, $\ket{t}=\ket{u_1}$.

The basic idea of solving a Type II eigenvalue problem through the quantum search is as follows. First we choose the initial state of the $n$-qubit quantum processor as a superposition of all eigenvectors of $A$: $\ket {\varphi}=\sum_k b_k\ket{u_k}$, such that the overlap $p\equiv |\bra t \varphi\rangle|^2=|b_1|^2$ is nonzero (later we will show we can further guarantee that such $\ket {\varphi}$ with $p\ge \frac{1}{N}$, can be found through a fixed number of random selections). Then quantum search can be applied to amplify the overlap $p$ through a sequence of fixed-point Grover iterations until $p$ is sufficiently close to $1$. Finally a single measurement is enough to output the target eigenvector $\ket{t}=\ket{u_1}$ and the corresponding eigenvalue $\lambda_1$.

Specifically, we need to figure out how to design the quantum oracle $R_t(\beta)$ for this particular problem. The classical oracle $f$ should satisfy $f(u_m)=1$ for $\lambda_m\in B_\epsilon(\lambda_0)$, and $f(u_m)=0$ for $\lambda_m\notin B_\epsilon(\lambda_0)$.
Hence, $R_t(\beta)$ should satisfy
\begin{align*}
R_t(\beta)\ket {u_m}=\begin{cases}
    e^{i\beta}\ket {u_m}, & \mbox{if }\lambda_m\in B_\epsilon(\lambda_0)\\
   \ket {u_m}, & \mbox{if }\lambda_m\notin B_\epsilon(\lambda_0)
    \end{cases}
\end{align*}
The action of $R_t(\beta)$ is to add a relative phase $e^{i\beta}$ to all target solutions $\ket{u_m}$. We need a basic component in $R_t(\beta)$ whose input is an eigenvector $\ket{u_m}$, and whose output is the corresponding eigenvector $\lambda_m$, which determines whether to add the relative phase $e^{i\beta}$. A candidate gate to implement such component is a quantum phase estimation gate $U_{PE}=e^{2\pi i A}$. Indeed, $R_t(\beta)$ can be implemented by the circuit shown in Fig.~\ref{fig2}, where $Z(\beta)=\text{diag}(1,e^{i\beta})$.


\begin{figure}
\centering
\includegraphics[totalheight=2in,width = 5.3in]{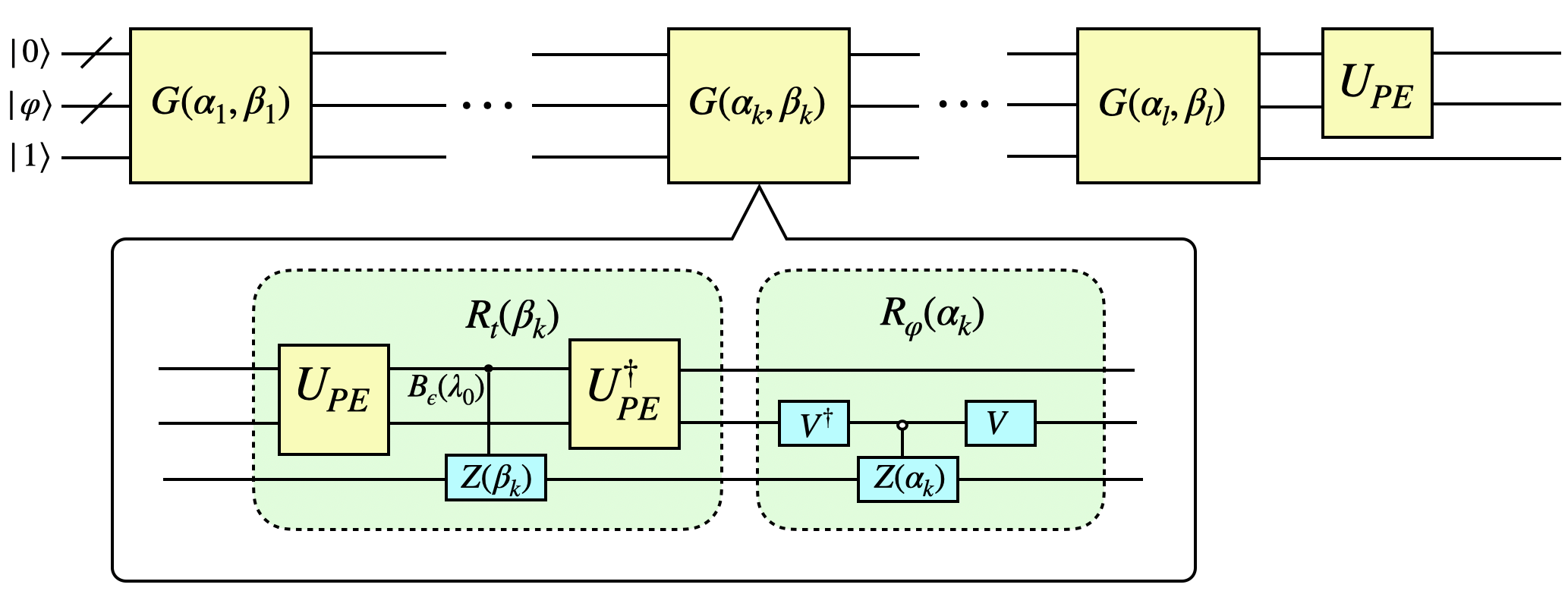}
\caption{Query-based method to solve Type II problems. $U_{PE}$ is the phase estimate mentioned in the previous section.  The Z-gate in the circuit is $Z(\beta)=\text{diag}(1,e^{i\beta})$. The first control gate is  $CZ(\beta)=\sum_{\tilde{\lambda}_j\in B_\epsilon(\lambda_0)} \ket {\tilde{\lambda}_j} \bra {\tilde{\lambda}_j}\otimes Z(\beta) + \sum_{\tilde{\lambda}_j\notin B_\epsilon(\lambda_0)} \ket {\tilde{\lambda}_j} \bra {\tilde{\lambda}_j}\otimes I$. The second control gate is $CZ'(\alpha)=|0\rangle \langle 0|\otimes Z(\alpha)+|1\rangle \langle 1|\otimes I$. $R_t(\beta)$ is the rotation of the target state and $R_\varphi(\alpha)$  is the rotation of the initial state.}
\label{fig2}
\end{figure}

Hence, the total quantum processor is composed of three registers, the $r$-qubit eigenvalue register initialized as $\ket 0$, the $n$-qubit eigenvector register initialized as $\ket\varphi$, and the ancilla single-qubit register initialized as $\ket 1$. Then the total initial state becomes $\ket\Phi=\ket 0\ket\varphi \ket 1$ as in Fig.~\ref{fig2}. In addition, due to the finite precision of the eigenvalue register, each eigenvalue $\lambda_m$ is denoted by a finite-precision state $\ket{\tilde\lambda_m}$. Define the conditional phase gate $CZ(\beta)$ on the eigenvalue register and the ancilla qubit to be:
\begin{align}
CZ(\beta)=\sum_{\tilde{\lambda}_m\in B_\epsilon(\lambda_0)} \ket {\tilde{\lambda}_m} \bra {\tilde{\lambda}_m}\otimes Z(\beta) + \sum_{\tilde{\lambda}_m\notin B_\epsilon(\lambda_0)} \ket {\tilde{\lambda}_m} \bra {\tilde{\lambda}_m}\otimes I
\end{align}
Then the action of $R_t(\beta)\equiv U^\dagger_{PE}\cdot CZ(\beta)\cdot U_{PE}$ on $\ket\Phi$ gives:
\begin{align*}
\begin{split}
R_t(\beta)\ket\Phi=&U^\dagger_{PE}CZ(\beta)U_{PE}|0\rangle|\varphi\rangle|1\rangle\\
\approx&U^\dagger_{PE}CZ(\beta)\sum_{i=1}^{N} b_i |\tilde{\lambda}_i\rangle|u_i\rangle\otimes|1\rangle\\
=&U^\dagger_{PE}\big(\sum_{i=2}^{N} b_i |\tilde{\lambda}_i\rangle|u_i\rangle+b_1e^{i\beta}|\tilde{\lambda}_1\rangle|u_1\rangle\big) \otimes|1\rangle\\
=&\ket 0 \big(\sum_{i=2}^{N} b_i |u_i\rangle+e^{i\beta}b_1|u_1\rangle\big)\ket 1,
\end{split}
\end{align*}
where the approximate sign $\approx$ is due to the finite precision of $\tilde{\lambda}_m$ to represent $\lambda_m$. We can see that such construction of $R_t(\beta)$ is exactly what we need to implement the YLC's fixed-point search method~\cite{yoder2014fixed}. On the other hand, given the initial state $\varphi$, there exists a unitary $V$ such that $\ket\varphi=V\ket 0$. Then we can construct $R_\varphi(\alpha)\equiv V CZ'(\alpha_k)V^\dag$, where $CZ'(\alpha)$ is the controlled-phase gate acting on the eigenvector qubit and the ancilla, and has the form:
\begin{align}
CZ'(\alpha)=|0\rangle \langle 0|\otimes Z(\alpha)+|1\rangle \langle 1|\otimes I
\end{align}

Based on $R_\varphi(\alpha)$ and $R_t(\beta)$, the fixed-point Grover iteration becomes $G=-R_\varphi(\alpha)R_t(\beta)$. Then we choose $\alpha_k$ and $\beta_k$ for each Grover iteration $G(\alpha_k,\beta_k)$ in the search sequence, according to the following rule~\cite{yoder2014fixed}:
\begin{align}
\alpha_j=\beta_{l-j+1}=-2\cot^{-1}(\tan(2\pi j/L)\sqrt{1-\eta^2})
\end{align}
where $\eta^{-1}=T_{1/L}(1/\sqrt{\delta})$, and $l$ is the number of iterations with $L=2l+1$. Here $T_L(x)=\cos[L\cos^{-1}(x)]$ is the $L^{th}$ Chebyshev polynomial of the first kind, and $\delta$ is the error requirement of the search results. So for error requirement $\delta$, we can obtain the optimal query number $q$:
\begin{align}\label{eq2}
q = \frac{\log(2/\sqrt{\delta})}{\sqrt{p}}-1.
\end{align}
where $q=2l$ represents the number of phase-estimation gates used in our proposal~\cite{yoder2014fixed}. If we assume, after $q$ number of fixed-point queries, the eigenvalue and the eigenvector registers becomes:
\begin{align}
\ket\Psi\equiv \ket 0 \ket\psi=\ket 0(c_1|u_1\rangle+\sum_{i\neq 1}c_i |u_i\rangle),
\end{align}
where the overlap $|c_1|^2$ between $\ket{u_1}$ and $\ket{\psi}$ is close to $1$, then applying one more QPE gate to $\ket\Psi$ will generate the final state
\begin{align}\label{eq:1}
U_{PE}|0\rangle|\psi\rangle=c_1|\tilde{\lambda}_1\rangle|u_1\rangle+\sum_{i\neq 1}c_i |\tilde{\lambda}_i\rangle |u_i\rangle,
\end{align}
where $\tilde{\lambda}_i$ is the approximation of the eigenvalue $\lambda_i$ corresponding to the eigenstate $\ket{u_i}$. Hence, when we measure the first register, we can get the desired $\ket{\tilde\lambda_1}$, $\ket{u_1}$ with high probability. Thus, the Type II eigenvalue problem is solved. The algorithm flow chart is summarized in the following.

\begin{algorithm}[H]
\caption{A query-based quantum eigensolver}
\begin{algorithmic}
\Procedure{eigensolver}{$\alpha_{j},\beta_{j},|\varphi_k\rangle,q$}
\State Prepare 11 initial states $|\varphi_k\rangle$
\For {k=1:11}
\State $|\phi^{(0)}\rangle=|\varphi_k\rangle$
\For {j=1:$q$}
\State $R_{\varphi_k}(\alpha_{j})=I-(1-e^{i\alpha_{j}})|\varphi_k\rangle\langle \varphi_k|$
\State $R_{t}(\beta_{j})=I-(1-e^{i\beta_{j}})|t\rangle\langle t|$
\State $|\phi^{(j)}\rangle=
-R_{\varphi_k}(\alpha_{j})R_{t}(\beta_{j})|\phi^{(j-1)}\rangle$
\EndFor
\State $|\psi_k\rangle=|\phi^{(j)}\rangle$
\EndFor
\State \Return{$\{|\psi_k\rangle\}$}
\EndProcedure
\end{algorithmic}
\end{algorithm}

\subsection{Initial state preparation for efficient search}\label{sec4.3}

From Eqn.~(\ref{eq2}), we can see that the greater the overlap between the initial state $\ket{\varphi}$ and target state $\ket{t}$, the fewer number of queries $q$ is required. Hence, the choice of the initial state is crucial in order to keep $q$ small and keep the query-based method efficient. From the discussion in Appendix~\ref{App1}, if the initial state $\ket{\varphi}$ is randomly chosen from the uniform distribution on the unit sphere of $\CC^N$, with $\ket{\varphi}=\sum_{j=1}^{N}b_j \ket{u_j}$ and assuming the target state $\ket{t}=\ket{u_1}$, then the probability of the overlap $p=|b_1|^2=|\bra {u_1}\varphi\rangle|^2$ to be larger than $\frac{1}{N}$ is given by
\begin{align}
P(|b_1|^2\ge\frac{1}{N})=(1-\frac{1}{N})^{N-1}\approx \frac{1}{e},\,\text{for sufficiently large }N.
\end{align}
This implies that we can find the minimum $K=11$ such that $1-(1-1/e)^K\ge 0.99$. In other words, if we prepare a set of $11$ randomly-chosen initial states $\{\ket{\varphi_k}\}$, then the probability of at least one of $\ket{\varphi_k}$'s satisfying $|\bra {\varphi_k}t\rangle | \ge 1/\sqrt{N}$ is larger than $0.99$. Such minimum value $K=11$ is independent of the size $N$ of the eigenvalue problem. For each $\ket{\varphi_k}$, if we implement the fixed-point query sequence with $q = \sqrt{N}\log(2/\sqrt{\delta})-1$ for $11$ times, then at least one of sequence after the final measurement will generate the final state $\ket{\tilde\lambda_1}\ket{u_1}$ with probability larger than $0.99$. Hence, the total query complexity of our method is $O(11q)=O(\sqrt N)$.

In addition, as discussed in the appendix, the set of $11$ randomly-chosen initial states can be substituted by a set of $11$ computational basis states, which can be efficiently generated in the experiment. For example, if we can efficiently generate the ground state $\ket{00\cdots 0}$ of an $n$-qubit system, then all the other computational basis state can be efficiently prepared from the ground state with no more than $\log N=n$ number of bit-flips.

\subsection{Complexity analysis}

Analyzing the complexity of the quantum circuit in Fig.~\ref{fig2}, we can see that the efficiency of the proposed query-based eigensolver algorithm depends on whether the unitary gate $U_{PE}$(or $U=e^{2\pi iA}$ in particular) can be efficiently generated. We can identify two cases when this is valid: (1) if $A$ is a $d$-sparse matrix~\cite{berry2007efficient}; (2) if $A$ is an inherent Hamiltonian that can be generated directly and efficiently on the quantum processor. In these cases, the complexity of generating $U_{PE}$ becomes $O(\poly(\log N))$. As we can prepare the initial state $\ket \varphi$ such that $p=|\bra {u_1}\varphi\rangle|^2\ge \frac{1}{N}$, the query complexity of our method is $q=O(\sqrt{N})$. Since the total gate complexity is equal to the total query complexity multiplied by the oracle gate complexity, i.e. the complexity of $U_{PE}$, we have the total complexity of our algorithm to be $O(\sqrt{N}\text{poly}( \log N))$. Compared with the complexity of using QPE eigensolver to solve a Type II problem, our query-based eigensolver obtains a quadratic speedup.

\section{Applications}\label{sec5}

In this section, we apply our algorithm to two eigenvalues problems in quantum physics. One is the Heisenberg model, and the other is the hydrogen molecule.

\subsection{The Heisenberg model}

The Heisenberg model is a well-known and useful testbed to study many-body physics and quantum information. The Hamiltonian of an $n$-qubit Heisenberg model is
\begin{align}
H = \sum_{j=1}^n J_x\sigma_j^x\sigma_{j+1}^x+J_y\sigma_j^y\sigma_{j+1}^y + J_z\sigma_j^z\sigma_{j+1}^z+h\sigma_j^z,
\end{align}
where $J_x, J_y, J_z$ are coupling constants and $h$ indicates the external magnetic field. $\sigma^k$, $k=x,y,z$ are Pauli operators, and $\sigma_j^k$ denotes $\sigma^k$ acting on the $j$-th qubit. With further assumption of the periodic boundary condition, we assume $\sigma_{n+1}^k=\sigma_1^k$. Given $H$ and $\lambda_0$, our task is to find the eigenvalue $\lambda_1$ of $H$ near $\lambda_0$ and the corresponding eigenstate $\ket{u_1}$. In order to facilitate the simulation, we reconstruct a new Hamiltonian $\tilde{H}=H-\lambda_0 I$, and then the original task is equivalent to solving the Type II eigenvalue problem for $\tilde{H}$ near the point $\tilde\lambda_0=0$. As mentioned earlier in the paper, we can use the query-based quantum eigensolver to solve this problem, using the quantum circuit shown in Fig.~\ref{fig2}.

In the following, we present the simulation results for two cases of the Heisenberg model with $n=4,5$. For each case, we randomly choose the parameters $J_x, J_y, J_z, h$ from the uniform distribution on $[0,1]$. In order to verify the efficiency of our proposed initial-state preparation method, we apply two methods to generate the initial state $\ket{\psi}$: (1) randomly choosing $11$ computational basis states $\ket{x_k}$, $x_k\in\{1,\cdots,N\}$, (2) randomly generating $11$ states $\ket{y_k}$ from the uniform distribution on the unit sphere in $\CC^{n}$. According to the previous discussion, there exists at least one out of the 11 initial states satisfying the overlap $p=b_1\ge\frac{1}{2^n}$ with a high probability($\ge0.99$), assuming the target state is $\ket{u_1}$, and $|b_1|^2=|\bra u_1\varphi\rangle|^2$. Substituting $p=\frac{1}{2^n}$ and error threshold $\delta=0.01$ into Eqn.~(\ref{eq2}), we obtain the minimum values of the query number $q_{\min}=11$ for $n= 4$, and $q_{\min}=16$ for $n=5$. In the circuit design, for each case of $n=4$ and $n=5$, our query-based eigensolver circuit encloses $4$ or $5$ qubits as the eigenvector register, $7$ qubits as the phase register and one more qubit as the ancilla. Next we implement our the query-based eigensolver $11$ times for each initial state $\ket{\psi}=\ket{x_k}$, or $\ket{\psi}=\ket{y_k}$, $k=1,\cdots, 11$.

Table \ref{tab:1} and \ref{tab:2} summarize the simulation results for $n=4$ and $n=5$. For $n=4$, after $q=11$ fixed-point Grover's queries, there are four $\ket{x_k}$'s satisfying $p_1\ge1/16$ with final fidelity $F\ge0.9933$, and three $\ket{y_k}$'s satisfying $p_2\ge1/16$ with $F\ge0.9892$(Table \ref{tab:1}). Analogously, for $n=5$, after $q=16$ fixed-point Grover's queries, there are four $\ket{x_k}$ satisfying $p_1\ge1/32$, with $F_1\ge0.9937$, and six $\ket{y_k}$ satisfying $p_2\ge1/32$, with $F_2\ge0.9724$(Table \ref{tab:2}). Thus, our query-based method is able to solve the Type II eigenvalue problem for the Heisenberg model, within the expected accuracy and computational complexity.

\begin{table}
\begin{tabular}{|c|c|c|c|c|c|c|c|c|c|c|c|} 
\hline
Initial State &$\ket{x_1}$&$\ket{x_2}$&$\ket{x_3}$&$\ket{x_4}$&$\ket{x_5}$&$\ket{x_6}$&$\ket{x_7}$ &$\ket{x_8}$ &$\ket{x_9}$&$\ket{x_{10}}$&$\ket{x_{11}}$\\
\hline
Overlap $p_1$&0.0084&2.416e-33&4.577e-33&0.0645&4.026e-33&0.1404&0.1301&1.490e-32&4.600e-33&0.0645 &1.233e-32\\
\hline
Fidelity $F_1$&0.3931&2.399e-33&4.502e-33&0.9956&4.005e-33&0.9960&0.9933&1.484e-32&4.524e-33&0.9956 &1.225e-32\\
\hline\hline
Initial State&$\ket{y_1}$&$\ket{y_2}$&$\ket{y_3}$&$\ket{y_4}$&$\ket{y_5}$&$\ket{y_6}$&$\ket{y_7}$ &$\ket{y_8}$ &$\ket{y_9}$&$\ket{y_{10}}$&$\ket{y_{11}}$\\
\hline
Overlap $p_2$&0.0545&0.0242&0.0048&0.05632&0.0385&0.1068&0.0329&0.0651&0.1123&0.0451&0.0226\\
\hline
Fidelity $F_2$&0.9921&0.7965&0.2469&0.9944&0.9428&0.9892&0.9016&0.9986&0.9893&0.9726&0.7706\\
\hline
\end{tabular}
\centering
\caption{\label{tab:1}Simulation results for the $4$-qubit Heisenberg model. By randomly selection, we choose $J_x=0.2365, J_y=0.8237, J_z=0.3689$ and $h=0.7326$. We choose $11$ $\ket{x_k}$'s and $11$ $\ket{y_k}$'s as the initial states. Here $p$ denotes the overlap between the initial state and the target state $\ket{t}$, and $F$ denotes the final fidelity after $q=11$ fixed-point Grover's queries. Simulation results show that there are four $\ket{x_k}$'s satisfying $p\ge1/16$ with $F\ge0.9933$, and three $\ket{y_k}$'s satisfying $p\ge1/16$ with $F\ge0.9892$.}
\end{table}

\begin{table}
\begin{tabular}{|c|c|c|c|c|c|c|c|c|c|c|c|} 
\hline
Initial State&$\ket{x_1}$&$\ket{x_2}$&$\ket{x_3}$&$\ket{x_4}$&$\ket{x_5}$&$\ket{x_6}$&$\ket{x_7}$ &$\ket{x_8}$ &$\ket{x_9}$&$\ket{x_{10}}$&$\ket{x_{11}}$\\
\hline
Overlap $p_1$&0.1853&7.704e-34&3.131e-32&0.1105&1.083e-34&0.1128&5.566e-32&1.738e-32&0.1128&0.0285 &6.068e-32\\
\hline
Fidelity $F_1$&0.9937&3.150e-34&2.845e-32&0.9967&4.458e-35&0.9981&4.937e-32&1.804e-32&0.9981&0.9793 &4.241e-32\\
\hline\hline
Initial State&$\ket{y_1}$&$\ket{y_2}$&$\ket{y_3}$&$\ket{y_4}$&$\ket{y_5}$&$\ket{y_6}$&$\ket{y_7}$ &$\ket{y_8}$ &$\ket{y_9}$&$\ket{y_{10}}$&$\ket{y_{11}}$\\
\hline
Overlap $p_2$&0.0275&0.0757&0.0311&0.1294&0.0525&0.0861&0.0074&0.0396&0.1442&0.0235&0.0234\\
\hline
Fidelity $F_2$&0.9261&0.9724&0.9698&0.9838&0.9847&0.9846&0.4863&0.9850&0.9884&0.9218&0.8968\\
\hline
\end{tabular}
\centering
\caption{\label{tab:2}Simulation results for the $5$-qubit Heisenberg model. By randomly selection, we choose $J_x=0.9489, J_y=0.3456, J_z=0.5629$ and $h=0.7475$. Analogous to the $n=4$ case, we choose $11$ $\ket{x_k}$'s and $11$ $\ket{y_k}$'s as the initial states. After $q=16$ fixed-point Grover's queries, there are four $\ket{x_k}$ satisfying $p\ge1/32$ with $F\ge0.9937$, and six $\ket{y_k}$ satisfying $p\ge1/32$ with $F\ge0.9724$.}
\end{table}

\subsection{The hydrogen molecule}

To illustrate our proposed algorithm, we simulate the hydrogen molecule to find an eigenvalue near a given point and its corresponding eigenstate. After Born-Oppenheimer approximation, the Hamiltonian of the hydrogen molecule is as follows:
\begin{align}
H=\sum_{i,j}h_{ij}a^\dagger_i a_j+\frac{1}{2}\sum_{i,j,k,l}h_{ijkl}a^\dagger_i a^\dagger_j a_k a_l,
\end{align}
where the coefficients $h_{ij}$ and $h_{ijkl}$ are one- and two-electron overlap integrals \cite{aspuru2005simulated,whitfield2011simulation,mcweeny1992methods}. The operators $a^\dagger_j$ and $a_j$ are the creation operator and the annihilation operators, respectively. After the Jordan-Wigner transformation \cite{whitfield2011simulation}, the Hamiltonian $H$ becomes
\begin{align}
\begin{split}
H_{JW}=&-0.81261I+0.171201\sigma^z_0+0.171201\sigma^z_1-0.2227965\sigma^z_2-0.2227965\sigma^z_3\\
&+0.16862325\sigma^z_1\sigma^z_0+0.12054625\sigma^z_2\sigma^z_0+0.165868\sigma^z_2\sigma^z_1+0.165868\sigma^z_3\sigma^z_0\\
&+0.12054625\sigma^z_3\sigma^z_1+0.17434925\sigma^z_3\sigma^z_2-0.04532175\sigma^x_3\sigma^x_2
\sigma^y_1\sigma^y_0\\
&+0.04532175\sigma^x_3\sigma^y_2\sigma^y_1\sigma^x_0+0.04532175\sigma^y_3\sigma^x_2\sigma^x_1\sigma^y_0-0.04532175\sigma^y_3\sigma^y_2\sigma^x_1\sigma^x_0
\end{split}
\end{align}

We aim to solve the eigenvalue problem for $H_{JW}$ near a given point $\lambda_0=-0.8837$. For convenience, we reconstruct a new Hamiltonian $\tilde{H}=H_{JW}-\lambda_0 I$. In order to apply our method to $\tilde H$, we enclose $4$ qubits in the eigenvector register, $7$ qubits in the phase register and one more qubit as the ancilla. With the error threshold chosen as $\delta=0.01$, we find the minimum query number required is $q_{\min}=11$. Analogous to the Heisenberg model, we randomly choose $11$ $\ket{x_k}$'s and $11$ $\ket{y_k}$'s as the initial states, which guarantees that at least one of them satisfies $p\ge \frac{1}{2^n}=1/16$. Indeed, simulation results demonstrate that, after $q=11$ queries, there are two $\ket{x_k}$'s satisfying $p\ge1/16$, with $F\ge0.9917$, and there are four $\ket{y_k}$'s satisfying $p\ge1/16$ with $F\ge0.9870$ (as shown in Table~\ref{tab:3}), in line with our expectation.

\begin{table}
\begin{tabular}{|c|c|c|c|c|c|c|c|c|c|c|c|} 
\hline
Initial State&$\ket{x_1}$&$\ket{x_2}$&$\ket{x_3}$&$\ket{x_4}$&$\ket{x_5}$&$\ket{x_6}$&$\ket{x_7}$ &$\ket{x_8}$ &$\ket{x_9}$&$\ket{x_{10}}$&$\ket{x_{11}}$\\
\hline
Overlap $p$&0&0&0&0&0&0&0.5&0&0&0.5&0\\
\hline
Fidelity $F$&0&0&0&0&0&0&0.9917&0&0&0.9917&0\\
\hline\hline
Initial State&$\ket{y_1}$&$\ket{y_2}$&$\ket{y_3}$&$\ket{y_4}$&$\ket{y_5}$&$\ket{y_6}$&$\ket{y_7}$ &$\ket{y_8}$ &$\ket{y_9}$&$\ket{y_{10}}$&$\ket{y_{11}}$\\
\hline
Overlap $p$&0.0211&0.0044&0.0324&0.1028&0.0021&0.0447&0.0790&0.0570&0.1089&0.0116&0.0695\\
\hline
Fidelity $F$&0.7433&0.2238&0.8929&0.9870&0.1146&0.9708&0.9950&0.9947&0.9883&0.5041&0.9973\\
\hline
\end{tabular}
\centering
\caption{\label{tab:3}Simulation results for the hydrogen molecule. Analogous to the Heisenberg model, we choose $11$ $\ket{x_k}$'s and $11$ $\ket{y_k}$'s as the initial states. After $q=11$ queries, there are two $\ket{x_k}$'s satisfying $p\ge1/16$,with $F\ge0.9917$; and there are four $\ket{y_k}$ satisfying $p\ge1/16$ with $F\ge0.9870$.}
\end{table}

\section{Conclusion}\label{sec6}
In this work, we have shown how to use phase estimation circuit to construct the quantum oracle of the fixed-point quantum search to solve Type II eigenvalue problems. This method can serve as a useful complement to the well-known QPE method, which is a typical Type I quantum eigensolver. We have analyzed the gate complexity of our query-based eigensolver and find that, if $U=e^{2\pi iA}$ can be efficiently generated on the quantum processor, then the overall complexity of our method is $O(\sqrt N\, \text{poly}(\log N))$. Compared with the complexity of using the QPE method to solve the same Type I problem, our proposed method has a quadratic speedup. In other words, the QPE method and the query-based method have their respective advantages: one is better at solving Type I problems, and the other is better at Type II. Notice that, in spite of the potential quantum speedup, there is still a gap between the application range of the quantum eigensolvers and that of the classical eigensolvers. Both the QPE method and the query-based eigensolver can only solve for a normal matrix, i.e., a matrix that is unitarily diagonalizable; in comparison, classical eigensolvers, such as the QR method and the power method, can solve for any diagonalizable matrices. How to construct the quantum eigensolver for a diagonalizable but not unitarily-diagonalizable matrix deserves further investigation. Moreover, the question whether one is able to find a quantum eigensolver with a gate complexity $O(\text{poly}(\log N))$ remains open.

\section*{Acknowledgments}
The authors gratefully acknowledge the grant from National Key R\&D Program of China, Grant No.2018YFA0306703. Y. L. thanks National Natural Science Foundation of China, Grant No. 11875050 and NSAF, Grant No. U1930403. L.L. thanks National Natural Science Foundation of China, Grant No. 61772565 and Guangdong Basic and Applied Basic Research Foundation Grant No. 2020B1515020050. B.L. is partly supported by the National Natural Science Foundation of China, Grant No. 61873262. We also thank Xiaokai Hou, Dingding Wen, Yuhan Huang, and Qingyu Li for helpful and inspiring discussions.

\bibliography{ref}

\appendix
\section{Initial-State Preparation}\label{App1}

\begin{theorem}

Let $\ket{\varphi}=\sum_{j=1}^{N}b_j \ket{u_j} $ be the initial state of an $N$-dimensional quantum system, and we assume it is randomly chosen from the unit sphere of $\CC^N$. Without loss of generality, we further assume the target state $\ket t=\ket {u_1}$.
Then the probability of $|\bra{\varphi}t\rangle|^2= |b_1|^2 \ge \frac{1}{N}$ is given by
\begin{equation}\label{eq:Pr-C}
\mathrm{Pr} \left(|b_1|^2 \ge \frac{1}{N}\right) =  \left( 1-\frac{1}{N} \right)^{N-1}.
\end{equation}

\end{theorem}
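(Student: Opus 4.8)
The plan is to model the randomly chosen initial state as a uniformly distributed point on the unit sphere of $\CC^N$ and to reduce the claim to computing the distribution of the squared modulus of a single expansion coefficient. Since the eigenvectors $\{\ket{u_j}\}$ form an orthonormal basis, the coefficient vector $(b_1,\dots,b_N)$ is itself a uniformly random unit vector in $\CC^N$, and by the unitary invariance of the uniform measure there is no loss in working directly with the coefficients rather than with $\ket\varphi$. The target probability is then a purely geometric quantity about one coordinate of a random point on $S^{2N-1}$.

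First I would realize the uniform distribution on the sphere through a Gaussian representation: let $g_1,\dots,g_N$ be i.i.d.\ standard complex Gaussians and set $b_j=g_j/\norm{g}$, so that $(b_1,\dots,b_N)$ is uniform on the unit sphere. Then
\begin{equation}
|b_1|^2=\frac{|g_1|^2}{\sum_{j=1}^N |g_j|^2}.
\end{equation}
Each $|g_j|^2$ is an independent exponential (equivalently $\mathrm{Gamma}(1,1)$) random variable, so the vector $(|b_1|^2,\dots,|b_N|^2)$ follows the flat Dirichlet distribution $\mathrm{Dir}(1,\dots,1)$, i.e.\ it is uniform on the probability simplex.

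Next I would extract the marginal law of $|b_1|^2$. A single coordinate of $\mathrm{Dir}(1,\dots,1)$ on $N$ components is $\mathrm{Beta}(1,N-1)$, with density $f(x)=(N-1)(1-x)^{N-2}$ on $[0,1]$; equivalently, identifying $\CC^N$ with $\R^{2N}$, the quantity $|b_1|^2$ is the sum of squares of two of the $2N$ real coordinates of a point uniform on $S^{2N-1}$, which is known to be $\mathrm{Beta}(1,N-1)$-distributed. The final step is the routine integral
\begin{equation}
\mathrm{Pr}\left(|b_1|^2\ge\frac{1}{N}\right)=\int_{1/N}^{1}(N-1)(1-x)^{N-2}\,dx=\left(1-\frac{1}{N}\right)^{N-1},
\end{equation}
obtained by the substitution $u=1-x$.

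The main obstacle is justifying the distributional claim in the middle step, namely that $|b_1|^2$ follows $\mathrm{Beta}(1,N-1)$; everything downstream is a one-line computation. I expect the cleanest route is the Gaussian representation above, which makes the Dirichlet structure transparent and sidesteps an explicit surface-measure calculation on $S^{2N-1}$. Alternatively one could compute the cumulative distribution directly as the ratio of the spherical-cap measure where $|b_1|^2\ge 1/N$ to the total measure, but that reintroduces precisely the surface integral the Gaussian trick is designed to avoid.
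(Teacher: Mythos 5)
Your proposal is correct, and it takes a genuinely different route from the paper. The paper proves the identity by a direct surface-measure computation on the sphere: it defines the cap $D_r^{2N-1}=\{z\in S_r^{2N-1}: x_1^2+y_1^2\le r^2/\xi\}$, computes its $(2N-1)$-dimensional area by writing it as the $r$-derivative of an enclosed volume integral, integrating the $(x_1,y_1)$ pair in polar coordinates and the remaining $2N-2$ real coordinates radially, and finally divides by the area of the full sphere before setting $\xi=N$. Your Gaussian representation ($b_j=g_j/\norm{g}$, hence $(|b_1|^2,\dots,|b_N|^2)\sim\mathrm{Dir}(1,\dots,1)$ and $|b_1|^2\sim\mathrm{Beta}(1,N-1)$) reaches the same one-dimensional integral $\int_{1/N}^1 (N-1)(1-x)^{N-2}\,dx$ without any explicit surface-measure calculus; in effect, the paper's multi-page computation is a hands-on derivation of exactly that Beta density. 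What your route buys is brevity and rigor-by-citation of standard probabilistic facts, sidestepping the paper's somewhat informal step of differentiating a volume integral with respect to the radius to produce a surface integral. What the paper's route buys is self-containedness (pure calculus, no appeal to Gaussian or Dirichlet theory) and direct reusability: the identical machinery, with $m$ coordinates constrained instead of one, yields the appendix's second theorem on $11$ computational-basis initial states via inclusion–exclusion. Your approach extends there too (the joint law of $m$ coordinates of a flat Dirichlet has density proportional to $(1-x_1-\cdots-x_m)^{N-m-1}$), but you would still need the inclusion–exclusion step, so the advantage narrows in that generalization.
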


\begin{proof}

Let $b_j = x_j + y_j i$ with $x_j, y_j \in \mathbb{R}$,
and denote a vector in $\RR^{2N}$ as $z = (x_1,y_1,\cdots,x_N,y_N)$.
Denote $S^{2N-1}_r = \{ z \in \mathbb{R}^{2N} \mid \sum_{j=1}^N (x_j^2 + y_j^2) = r^2 \}$ as the $(2N-1)$ dimensional sphere with radius $r$ in $\RR^{2N}$,
and $D^{2N-1}_r = \{z \in S^{2N-1}_r \mid x_1^2 + y_1^2 \le \frac{r^2}{\xi} \}$ as a closed subset of $S^{2N-1}_r$, with $\xi\ge 1$.
Since
\[
 \sum_{j=1}^N |b_j|^2 = \sum_{j=1}^N (x_j^2 + y_j^2) = 1,
\]
we would like to calculate the following probability:
\[
\mathrm{Pr} \left(|b_1| \le \frac{1}{\sqrt{\xi}}\right) = \mathrm{Pr} \left( x_1^2 + y_1^2 \le \frac{1}{\xi}\right) = \frac{|D^{2N-1}_1|}{ |S^{2N-1}_1| },
\]

The area of $D _r^{N-1}$ is calculated as follows (where $1_\omega$ denotes the characteristic function of the set $\omega$):
\[
\begin{aligned}
& \quad |D_r^{2N-1}|  = \int_{D_r^{2N-1}} 1~dS_r^{2N-1} = \int_{S_r^{2N-1}} 1_{D _r^{2N-1}}(z) ~dS_r^{2N-1} = \frac{\partial}{ \partial r} \int_{V_r^{2N}} 1_{D_{|z|}^{2N-1}}(z) ~dz \\
& = \frac{\partial}{ \partial r}\int_{|z|^2 \le r^2} 1_{D_{|z|}^{2N-1}}(x_1,y_1,\tilde{z})~dx_1 dy_1d\tilde{z} \qquad (\tilde{z} = (x_2,y_2,\cdots,x_N,y_N))\\
& = |S_1^{2N-3} |\frac{\partial}{ \partial r} \int_{|z|^2 =x_1^2 + y_1^2 + \rho^2 \le r^2} 1_{D_{|z|}^{2N-1}} \rho^{2N-3} ~dx_1dy_1 d \rho  \\
& =  |S_1^{2N-3}|\frac{\partial}{ \partial r} \left( \int_0^r r_1~dr_1 \int_0^{2\pi}~d\theta_1   \int_0^{\sqrt{r^2 - r_1^2}} 1_{D_{|z|}^{2N-1}} \rho^{2N-3} d\rho \right) \quad (dx_1dy_1 = r_1 dr_1 d\theta_1) \\
& = |S_1^{2N-3}|  r 2\pi \int_0^{r/\sqrt{\xi}} r_1 (r^2-r_1^2)^\frac{2N-4}{2}~dr_1 \\
& = |S_1^{2N-3}|  r^{2N-1}2\pi \int_0^{1/\sqrt{\xi}} \mu (1-\mu^2)^{N-2}~d\mu \qquad (dr_1 = r d\mu)\\
& = |S_1^{2N-3}|   r^{2N-1} \pi  \int_0^{1/\xi} (1-t)^{N-2}~dt  \qquad (\mu^2 = t) \\
& = |S_1^{2N-3}|   r^{2N-1}\pi \frac{1}{N-1}\left[ 1- (1-\frac{1}{\xi})^{N-1} \right].
\end{aligned}
\]
In a similar manner, we can obtain the area of $S^{2N-1}_r$:
\[
|S^{2N-1}_r|  = |S_1^{2N-3}|   r^{2N-1} \frac{\pi}{N-1},
\]
Hence, by taking $r=1$ and $\xi = N$, we find:
\begin{align*}
\mathrm{Pr} \left(|b_1| \le \frac{1}{\sqrt{N}}\right) &= 1 - \left( 1-\frac{1}{N} \right)^{N-1} \mathop{\longrightarrow}^{N \rightarrow \infty} 1-e^{-1}\\
\mathrm{Pr} \left(|b_1| \ge \frac{1}{\sqrt{N}}\right) &=  \left( 1-\frac{1}{N} \right)^{N-1} \mathop{\longrightarrow}^{N \rightarrow \infty} e^{-1}.
\end{align*}

\end{proof}


The above result suggests that a randomly chosen initial state can have an overlap larger than $\frac{1}{N}$ with the unknown target state $\ket t$, with a probability of $1/e$. It implies that if we take a sufficiently large set of randomly-chosen initial states of size $K$, then we can find the minimum value $K=11$ such that $1-(1-1/e)^K\ge 0.99$. In other words, for a set of randomly-chosen initial states $\{\ket{y_k}\}$ with set size $11$, at least one of $\ket{y_k}$'s will have an overlap larger than $ \frac{1}{N}$ with $\ket t$, with probability larger than $0.99$. Such minimum value $K=11$ is independent of the size $N$ of the eigenvalue problem.

In spite of its usefulness in theory, a randomly chosen initial state from the uniform distribution is hard to prepare in practice. Alternatively, we aim to find a set of initial states which not only have the above nice property, but are also easy to generate in experiment. Fortunately, such initial states do exist. In the following, we can prove that the $11$ initial states can be simply chosen from the computational basis, predetermined by the quantum system. (Normally, we assume that the computational basis states are easy to prepare. For example, we assume the ground state $\ket{00\cdots 0}$ is easy to prepare, and all the other computational basis state can be prepared from the ground state with no more than $\log N$ rotations.)

Specifically, let $\{\ket{\phi_i}\}_{i=1}^m$ be a set of vectors chosen from the computational basis of $\CC^N$, ($m \ll N$), and let $\ket t$ be the target eigenstate. Since $\ket t$ is unknown before we solve the eigenvalue problem, we can assume it is randomly chosen from the uniform distribution on the unit sphere of $\CC^N$. Then we would like to evaluate the following probability:
\[
p_1=\text{Pr}\big(\max_{1\le i \le m} |\bra {\phi_i}t\rangle | \ge 1/\sqrt{\xi}\big).
\]
Without loss of generality, we can further assume $\{\ket{\phi_i}\}_{i=1}^m$ are the first $m$ number of computational basis vectors:
\[
\ket{\phi_i} = \ket{e_i} = (0,\cdots,0,\underbrace{1}_{i\text{th}},0 \cdots,0)^T.
\]
Since $\ket t$ can be considered as a complex vector $\bm{z}\in\CC^N$, we denote $\ket t=\bm{z}=(z_1,z_2,\ldots,z_N)^T$, where $z_j = x_j + y_j i$ with $x_j, y_j \in \mathbb{R}$. Then $\bm{z}$ can be considered as a vector in $\RR^{2N}$ with $\bm{z} = (x_1,y_1,\cdots,x_N,y_N)^T$. Let $S^{2N-1}$ denote the $(2N-1)$-dimensional unit sphere in $\RR^{2N}$,
and define
\[
\tilde{S}^{2N-1}\equiv \{\bm{z} \in \RR^{2N} : |\bra {\phi_i}t\rangle|^2 = x^2_i + y^2_i < 1/\xi, \quad i = 1,2, \cdots,m\} \subset S^{2N-1} \quad \quad (\xi \ge m).
\]
Then we have:
\[
p_1 = 1- \frac{\tilde{S}^{2N-1}}{S^{2N-1}}.
\]
It suffices to calculate the area of  $\tilde{S}^{2N-1}$. Noting that $|\bm{z}|^2<r^2$ is equivalent to $(x_1^2+y_1^2)+ \cdots + (x_N^2+y_N^2)<r^2$, and in use of the transform $dx_1dy_1 = r_1dr_1d\theta_1$
we calculate as
\[
\begin{aligned}
& \quad |\tilde{S}^{2N-1}| = \int_{S^{2N-1}} 1_{\tilde{S}^{2N-1}} ~dS^{2N-1} \\
& \mathop{=}^{r=1} \frac{\partial}{\partial r} \int_{|\bm{z}|<r} 1_{\{\frac{|z_i|}{|z|} < \frac{1}{\sqrt{\xi}}, 1 \le i \le m \}} (x_1,y_1,\cdots, x_N,y_N )~dx_1dy_1\cdots dx_N dy_N \\
& \mathop{=}^{r=1} \prod_{i=1}^m \left( \int_0^{\frac{r}{\sqrt{\xi}}} r_idr_i \int_0^{2\pi}d\theta_i \right)  \frac{\partial}{\partial r} \int_0^{\sqrt{r^2 - r_1^2 - \cdots -r_m^2}} \rho^{2N-2m -1}d\rho \int_{ S^{2N-2m-1}}~dS^{2N-2m-1} \\
& \mathop{=}^{r=1} r (2\pi)^m |S^{2N-2m-1}| \underbrace{ \int_0^{\frac{r}{\sqrt{\xi}}}\cdots \int_0^{\frac{r}{\sqrt{\xi}}} }_{m} r_1\cdots r_m (r^2 - r_1^2 - \cdots - r_m^2)^{N-m-1}~dr_1\cdots dr_m \\
& =  \pi^m |S^{2N-2m-1}|  \underbrace{ \int_0^{\frac{1}{\xi}}\cdots \int_0^{\frac{1}{\xi}} }_{m} (1-\mu_1 - \cdots - \mu_m)^{N-m-1}  \qquad \quad (\mu_i=r_i^2) \\
& = \frac{\pi^m |S^{2N-2m-1}|}{ (N-m)(N-m+1)\cdots(N-1) } \sum_{k=0}^m (-1)^k  \dbinom{m}{k}\left( 1-\frac{k}{\xi} \right)^{N-1}.
\end{aligned}
\]
In use of
\[
|S^{2N-1}| = \frac{\pi}{N-1} |S^{2N-2-1}| = \cdots = \frac{\pi^m}{(N-1)(N-2)\cdots (N-m)} |S^{2N-2m-1}|,
\]
we find that
\[
p_1 = 1- \frac{\tilde{S}^{2N-1}}{S^{2N-1}} = 1 - \sum_{k=0}^m (-1)^k  \dbinom{m}{k}\left( 1-\frac{k}{\xi} \right)^{N-1}.
\]
Taking $\xi = N$ and passing to the limit $N \rightarrow \infty$, we see that
\[
\begin{aligned}
p_1 = 1 - \sum_{k=0}^m (-1)^k  \dbinom{m}{k}\left( 1-\frac{k}{N} \right)^{N-1} \rightarrow 1- \sum_{k=0}^m (-1)^k  \dbinom{m}{k}e^{-k} = 1- (1-e^{-1})^m.
\end{aligned}
\]
Thus, the minimum value $m=11$ is sufficient to make $p_1\ge 0.99$. We summarize this result into the following:

\begin{theorem}

Let the unknown target state $\ket t$ be randomly chosen from the uniform distribution on the unit sphere of $\CC^N$.
If we prepare a set of $11$ initial states $\{\ket{\phi_k}\}$, $k=1,\cdots,11$, chosen from the computational basis of $\CC^N$, then the probability of at least one of $\ket{\phi_k}$'s satisfying $|\bra {\phi_k}t\rangle | \ge 1/\sqrt{N}$ is larger than $0.99$.

\end{theorem}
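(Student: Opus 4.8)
The plan is to reduce the probabilistic claim to a purely geometric computation of the ratio of the measure of a ``bad'' region on the unit sphere to its total surface area. First I would identify the unknown target $\ket t$ with a uniformly random point $\bm{z} = (x_1,y_1,\ldots,x_N,y_N)$ on the sphere $S^{2N-1}\subset\RR^{2N}$, writing each complex amplitude as $z_j = x_j + y_j i$. Because the uniform measure is invariant under permutations of the coordinate pairs, I may assume without loss of generality that the $m$ computational basis states are the first $m$ standard basis vectors $\ket{e_i}$, so the overlap becomes $|\bra{\phi_i}t\rangle|^2 = x_i^2 + y_i^2$. The event that \emph{every} basis state fails the threshold $1/\sqrt{\xi}$ is then the subset
\[
\tilde S^{2N-1} = \{\bm{z} \in S^{2N-1} : x_i^2 + y_i^2 < 1/\xi,\ i=1,\ldots,m\},
\]
and the quantity of interest is $p_1 = 1 - |\tilde S^{2N-1}|/|S^{2N-1}|$, to be specialized to $\xi = N$.

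The core step is to evaluate $|\tilde S^{2N-1}|$. Here I would use the standard device of expressing a surface measure as the radial derivative at $r=1$ of the volume of the corresponding solid region, reducing the spherical integral to a Euclidean one over the ball $\{|\bm{z}|<r\}$. Passing to polar coordinates $(r_i,\theta_i)$ in each of the $m$ constrained planes via $dx_i\, dy_i = r_i\, dr_i\, d\theta_i$ factors out $(2\pi)^m$, while integrating the remaining $2N-2m$ free coordinates over a sphere of radius $\sqrt{r^2 - r_1^2 - \cdots - r_m^2}$ produces a factor $|S^{2N-2m-1}|$ together with a power $\rho^{2N-2m-1}$. After the substitution $\mu_i = r_i^2$ this collapses to the $m$-fold integral
\[
\int_0^{1/\xi}\cdots\int_0^{1/\xi} (1 - \mu_1 - \cdots - \mu_m)^{N-m-1}\, d\mu_1\cdots d\mu_m,
\]
which I would evaluate by iterating the elementary one-variable integral and collecting terms into the inclusion--exclusion sum $\sum_{k=0}^m (-1)^k \binom{m}{k}(1-k/\xi)^{N-1}$, carrying along the prefactor $\pi^m |S^{2N-2m-1}| / [(N-m)\cdots(N-1)]$.

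To finish, I would cancel the geometric prefactors using the recursion $|S^{2N-1}| = \frac{\pi}{N-1}|S^{2N-3}| = \cdots = \frac{\pi^m}{(N-1)\cdots(N-m)}|S^{2N-2m-1}|$, which matches the denominator above exactly and leaves the clean closed form
\[
p_1 = 1 - \sum_{k=0}^m (-1)^k \binom{m}{k}\left(1 - \frac{k}{\xi}\right)^{N-1}.
\]
Setting $\xi = N$ and letting $N\to\infty$, each summand tends to $(-1)^k\binom{m}{k}e^{-k}$, so by the binomial theorem $p_1 \to 1 - (1-e^{-1})^m$. A direct numerical check then shows that $m=11$ is the smallest integer making this bound exceed $0.99$, which yields the claimed $K=11$ and completes the proof.

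I expect the main obstacle to be the evaluation of $|\tilde S^{2N-1}|$, specifically keeping the bookkeeping of the nested radial integrals and the sphere-dimension recursion consistent so that the prefactors cancel identically rather than leaving a stray dimension-dependent constant. The delicate point is turning the $m$-fold integral of $(1-\sum_i\mu_i)^{N-m-1}$ over a cube into the inclusion--exclusion form; this is cleanest by induction on $m$, where each successive integration raises the exponent by one and introduces one additional binomial layer. By contrast, the limit $N\to\infty$ and the verification that $K=11$ suffices are routine.
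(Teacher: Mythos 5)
Your proposal follows essentially the same route as the paper's own proof: identifying $\ket t$ with a uniform point on $S^{2N-1}\subset\RR^{2N}$, reducing to the surface-measure ratio $p_1 = 1 - |\tilde S^{2N-1}|/|S^{2N-1}|$, evaluating $|\tilde S^{2N-1}|$ via the radial-derivative trick, polar coordinates, and the substitution $\mu_i = r_i^2$ to reach the inclusion--exclusion sum $\sum_{k=0}^m(-1)^k\binom{m}{k}(1-k/\xi)^{N-1}$, cancelling prefactors with the sphere-area recursion, and then taking $\xi=N$, $N\to\infty$ to get $1-(1-e^{-1})^m$ and checking $m=11$ numerically. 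The argument is correct and matches the paper step for step, including its reliance on the large-$N$ limit.
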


\end{document}